\renewcommand\footnotetextcopyrightpermission[1]{} 
\renewcommand\@formatdoi[1]{\ignorespaces}
\def\leftcite{\@up[}\def\rightcite{\@up]}
\def\cite{\def\citeauthoryear##1##2{\def\@thisauthor{##1}%
             \ifx \@lastauthor \@thisauthor \relax \else##1, \fi ##2}\@icite}
\def\shortcite{\def\citeauthoryear##1##2{##2}\@icite}
\def\citeauthor{\def\citeauthoryear##1##2{##1}\@nbcite}
\def\citeyear{\def\citeauthoryear##1##2{##2}\@nbcite}
\def\@icite{\leavevmode\def\@citeseppen{-1000}%
 \def\@cite##1##2{\leftcite\nobreak\hskip 0in{##1\if@tempswa , ##2\fi}\rightcite}%
 \@ifnextchar [{\@tempswatrue\@citex}{\@tempswafalse\@citex[]}}
\def\@nbcite{\leavevmode\def\@citeseppen{1000}%
 \def\@cite##1##2{{##1\if@tempswa , ##2\fi}}%
 \@ifnextchar [{\@tempswatrue\@citex}{\@tempswafalse\@citex[]}}
\def\@citex[#1]#2{%
  \def\@lastauthor{}\def\@citea{}%
  \@cite{\@for\@citeb:=#2\do
    {\@citea\def\@citea{;\penalty\@citeseppen\ }%
     \if@filesw\immediate\write\@auxout{\string\citation{\@citeb}}\fi
     \@ifundefined{b@\@citeb}{\def\@thisauthor{}{\bf ?}\@warning
       {Citation `\@citeb' on page \thepage \space undefined}}%
     {\csname b@\@citeb\endcsname}\let\@lastauthor\@thisauthor}}{#1}}
\def\@biblabel#1{\def\citeauthoryear##1##2{##1, ##2}\@up{[}#1\@up{]}\hfill}
\def\@up#1{\leavevmode\raise.2ex\hbox{#1}}
\title{Efficient Method for Finding Optimal Strategies in Chopstick Auctions with Uniform Objects Values}
\author{Stanisław Kaźmierowski}
\affiliation{
  \institution{University of Warsaw}
  \city{Warsaw}
  \country{Poland}}
\email{s.kazmierowski@uw.edu.pl}
\author{Marcin Dziubiński}
\affiliation{
  \institution{University of Warsaw}
  \city{Warsaw}
  \country{Poland}}
\email{m.dziubinski@mimuw.edu.pl}
\begin{abstract}
We propose an algorithm for computing Nash equilibria (NE) in a class of conflicts with multiple battlefields with uniform battlefield values and a non-linear aggregation function. By expanding the symmetrization idea of Hart~\cite{hart}, proposed for the Colonel Blotto game, to the wider class of symmetric conflicts with multiple battlefields, we reduce the number of strategies of the players by an exponential factor. We propose a \emph{clash matrix algorithm} which allows for computing the payoffs in the \emph{symmetrized} model in polynomial time. Combining symmetrization and clash matrix algorithm with the double oracle algorithm we obtain an algorithm for computing NE in the models in question that achieves a significant speed-up as compared to the standard, LP-based, approach. We also introduce a heuristic to further speed up the process. Overall, our approach offers an efficient and novel method for computing NE in a specific class of conflicts, with potential practical applications in various fields.
\end{abstract}
\keywords{chopstick auction; Nash Equilibrium; conflicts with multiple battlefields; zero-sum game; optimal strategies}
\newtheorem{observation}{Observation}
\DeclareMathOperator{\maxass}{max\_assign}
\DeclareMathSymbol{\shortminus}{\mathbin}{AMSa}{"39}
\newcommand{\aggfunc}{f\!:\!\left\{\shortminus 1,0,1 \right\}^n\!\rightarrow\!R}
\newcommand{\sign}{\mathrm{sign}}
\newcommand{\playerA}{\mathsf{A}}
\newcommand{\playerB}{\mathsf{B}}
\newcommand{\thisPlayer}{C}
\newcommand{\thatPlayer}{{\shortminus C}}
\newcommand{\examplePlayer}{\thisPlayer \!\in \!\left\{ \playerA, \playerB \right\}}
\newcommand{\thisRes}{D_\thisPlayer}
\newcommand{\aRes}{D_\playerA}
\newcommand{\bRes}{D_\playerB}
\newcommand{\fieldsSet}{\left[n\right]}
\newcommand{\pure}{\bm{s}}
\newcommand{\thisPure}{\pure^\thisPlayer}
\newcommand{\thatPure}{\pure^\thatPlayer}
\newcommand{\aPure}{\pure^\playerA}
\newcommand{\bPure}{\pure^\playerB}
\newcommand{\purePair}{ \aPure, \bPure }
\newcommand{\pureProfile}{\left( \purePair \right)}
\newcommand{\thisPureSet}{S_\thisPlayer}
\newcommand{\thatPureSet}{S_\thatPlayer}
\newcommand{\aPureSet}{S_\playerA}
\newcommand{\bPureSet}{S_\playerB}
\newcommand{\pureSym}{\sigma\!\left(\pure\right)}
\newcommand{\aPureSym}{\sigma\!\left(\aPure\right)}
\newcommand{\bPureSym}{\sigma\!\left(\bPure\right)}
\newcommand{\thisSymSet}{\sigma\!\left(S_\thisPlayer\right)}
\newcommand{\aSymSet}{\sigma\!\left(S_\playerA\right)}
\newcommand{\bSymSet}{\sigma\!\left(S_\playerB\right)}
\newcommand{\symmPair}{\aPureSym,\bPureSym}
\newcommand{\symmProfile}{\left(\symmPair \right)}
\newcommand{\thisPurePayoff}{\pi^\thisPlayer_f \!}
\newcommand{\thisPayoff}{\Pi^\thisPlayer_f \!}
\newcommand{\thisSymPayoff}{\thisPayoff \symmProfile}
\newcommand{\aSymPayoff}{\Pi^\playerA_f \! \symmProfile}
\newcommand{\mixed}{\bm{\xi}}
\newcommand{\thisMixed}{\mixed^\thisPlayer}
\newcommand{\thatMixed}{\mixed^\thatPlayer}
\newcommand{\aMixed}{\mixed^\playerA}
\newcommand{\bMixed}{\mixed^\playerB}
\newcommand{\mixedPair}{\aMixed, \bMixed}
\newcommand{\mixedProfile}{\left(\mixedPair\right)}
\newcommand{\thisMixedSet}{\Delta\left(\thisPureSet\right)}
\newcommand{\thatMixedSet}{\Delta\left(\thatPureSet\right)}
\newcommand{\mixedSym}{\sigma\!\left(\mixed\right)}
\newcommand{\thatMixedSym}{\sigma\!\left(\thatMixed\right)}
\newcommand{\aMixedSym}{\sigma\!\left(\aMixed\right)}
\newcommand{\bMixedSym}{\sigma\!\left(\bMixed\right)}
\newcommand{\mixedSymProfile}{\left(\aMixedSym, \bMixedSym\right)}
\newcommand{\aSinglePayoff}{u_i^\playerA \! \left( \aPure, \bPure \right)}
\newcommand{\bSinglePayoff}{u_i^\playerB \! \left( \aPure, \bPure \right)}
\newcommand{\aPurePayoff}{\pi^\playerA_f \!}
\newcommand{\bPurePayoff}{\pi^\playerB_f \!}
\newcommand{\aPurePayoffArgs}{\aPurePayoff \pureProfile}
\newcommand{\bPurePayoffArgs}{\bPurePayoff \pureProfile}
\newcommand{\thisMixedPayoff}{\thisPayoff \mixedProfile}
\newcommand{\clashM}{\bm{M}}
\newcommand{\clashMatrix}{\clashM^{\aPure, \bPure}}
\newcommand{\outcomesVector}{\bm{u}}
\newcommand{\BibTeX}{\rm B\kern-.05em{\sc i\kern-.025em b}\kern-.08em\TeX}
\gdef\@copyrightpermission{
	\begin{minipage}{0.3\columnwidth}
		\href{https://creativecommons.org/licenses/by/4.0/}{\includegraphics[width=0.90\textwidth]{figs/by.eps}}
	\end{minipage}\hfill
	\begin{minipage}{0.7\columnwidth}
		\href{https://creativecommons.org/licenses/by/4.0/}{This work is licensed under a Creative Commons Attribution International 4.0 License.}
	\end{minipage}
	\vspace{5pt}
}
\begin{document}


\pagestyle{fancy}
\fancyhead{}


\maketitle 

\section{Introduction}
Real-life scenarios of rivalry between two or more players often involve competition in more than one area at a time. Examples include firms competing simultaneously for multiple markets, military commanders deploying troops to the front lines, political parties distributing campaign funds among different regions of the country, or airport security having to assign a number of police dogs among security checkpoints. Game theoretic models used to formally capture such competition form a class called 
\emph{conflicts with multiple battlefields}~\cite{multiple_battlefields}. A famous example of such a model is the Colonel Blotto game~\cite{borel}.
In these models, two players distribute their limited resources across a number of battlefields. Assignment of resources determines the outcome at each battlefield and the outcome of the game is an aggregate of the outcomes at the individual battlefields. The way the outcomes at individual battlefields are aggregated depends on a particular application. A well-known aggregation function is taking the sum of outcomes across all the battlefields. This leads to the Colonel Blotto game.

In this paper we address the problem of computing Nash equilibria under two natural aggregation functions called \emph{more than opponent} and \emph{majoritarian}. The more than opponent aggregation takes value $1$, if the number of battlefields won is strictly greater than the number of battlefields lost, $0$, if it is equal, and $-1$, if it is less. It is suitable for military conflicts where the fighting parties care not about how many battles they win but about winning more battles than the opponent in order to win the war. It is also applicable to political competitions where candidates are rather win-motivated than vote-motivated, caring predominantly about winning the competition and not about the margin at which they win.
The majoritarian aggregation takes a value of $1$, if the number of battlefields won exceeds half of the total number, $0$ if it is equal, and $-1$ if it is less.
It is applicable to single-member districts voting rivalry between two political parties. It is also considered in a model of simultaneous standard auctions with externalities, called chopstick auctions~\cite{three_objects_symm_auction,beyond_chopstick,Englmaier}, where only winning more than half of the objects bears any value to each bidder.
Unlike in the Colonel Blotto game, the aggregation functions in question do not result in bilinear payoffs and, therefore, we can not apply the polynomial time algorithms designed for conflicts with multiple battlefields and bilinear payoffs~\cite{duels_to_battlefields,BehnezhadDehghaniDerakhshanHajiaghayiSeddighin17}.

In this paper, we give a homogeneous formulation of conflicts with multiple battlefields. The main obstacles in computing the NE in these models are the exponential (wrt the model parameters) size of the strategy sets of the players and non-bilinear payoffs. Generalizing the idea of Hart~\cite{hart}, we use a reduction of the strategy sets of the models in question by an exponential factor, called \emph{symmetrization}.
The reduction of the strategy sets makes the computation of payoffs in the symmetrized model challenging. Our main contribution is a \emph{clash matrix algorithm} 
which allows for computing the payoffs in the \emph{symmetrized} models in polynomial time. We combine the polynomial time algorithm for computing payoffs
with the \emph{double oracle algorithm}~\cite{double_oracle_algorithm}, used to compute equilibria in zero-sum games with large strategy spaces.
We propose a heuristic, based on the natural monotonicity properties of the aggregation functions, that allows us to further speed up this algorithm. We show, via computational experiments, that 
thus obtained algorithm achieves a significant speed-up as compared to any approach that requires calculating the entire payoff matrix.

\subsection{Related literature} 
Conflicts with multiple battlefields, considered since the beginning of modern game theory, were first introduced by Borel~\cite{borel}, where the Colonel Blotto game is defined. Since then, 
many game theoretic models that fall into this category were studied, including the hide-and-seek game~\cite{VonNeumann53}, chopstick auctions~\cite{three_objects_symm_auction}, audit games~\cite{BehnezhadBlumDerakhshanHajiaghayiMahdianPapadimitriouRivestSeddighinStark18}, as well as some types of security games~\cite{PitaJainOrdonezPortwayTambeWesternParuchuriKraus09,KorzhykYinKiekintveldConitzerTambe11}. See~\cite{multiple_battlefields} for an excellent survey of this type of models.

The literature on NE computation is vast, and we restrict attention to the closest related works. The models we consider are two-player zero-sum games. 
The computation of NE in such games can be reduced to solving an LP problem~\cite{zero_sum_games_solving} and requires polynomial time with respect to the number of strategies of both players. An exponential number of pure strategies with respect to the model parameters makes those methods inapplicable in the case of conflicts with multiple battlefields. In the case of bilinear payoffs, like in Colonel Blotto game, NE can be computed in polynomial time~\cite{duels_to_battlefields,BehnezhadDehghaniDerakhshanHajiaghayiSeddighin17}. This is also possible for some types of security games~\cite{KorzhykYinKiekintveldConitzerTambe11}.

The \emph{double oracle algorithm (DOA)} was first proposed as a method for solving zero-sum games in~\cite{double_oracle_algorithm}. It was later used for solving zero-sum security games on graphs in~\cite{doa_graph}, which showed that it is worth considering as an alternative for LP-solver-based algorithms.
Games considered in this paper fall into a broader category of integer programming games (IPGs)~\cite{koppe2011}. \cite{CarvalhoLodiPedroso2022} proposed a \emph{sampled generation method (SGM)} which is a significant generalization of DOA to multiple players and arbitrary payoffs. They show that SGM allows for finding NE in a finite number of steps. SGM (and, in particular, DOA) proceeds by iteratively expanding the sets of strategies of the players and solving thus obtained sampled games. Since conflicts with multiple battlefields we consider are zero-sum games, we solve the sample games using the LP solvers facilitated with the clash matrix algorithm. In the case of the general IPGs the PNS algorithm can be used~\cite{PorterNudelmanShoham2008,CarvalhoLodiPedroso2022}.

The rest of the paper is organized as follows. In Section~\ref{sec:model} we define the model of conflicts with multiple battlefields. Section~\ref{sec:complexity} provides the complexity result of finding the best response in chopstick auctions and majoritarian auctions. In Section~\ref{sec:symmetrized} we define the symmetrized version of the model. Computation of payoffs from symmetrized strategy profiles and the clash matrix algorithm are given in Section~\ref{sec:clash_matrix}. Section~\ref{sec:doa} presents the double oracle algorithm and the heuristic for improving its performance in the case of models in question. We provide a computational evaluation of our approach in Section~\ref{sec:evaluation} and conclude in Section~\ref{sec:concl}.

\section{Model}
\label{sec:model}
Conflicts with multiple battlefields model a competition between two players, whom we will denote $\playerA$ and $\playerB$. Each player has a number of discrete resources, e.g. military units or coins, described by $\aRes$ and $\bRes$, respectively. Players compete on the set $\fieldsSet =\left \{1,2,\ldots\,n\right\}$ of $n$ battlefields. We assume that $n \geq 2$. Each battlefield $i \in \fieldsSet$ has value~$1$. The player's strategy is to distribute her resources across battlefields. The set of strategies of player $\thisPlayer \in \left\{ \playerA, \playerB \right\}$ is:
\begin{equation}
\label{pure_strategy}
    \thisPureSet = \left\{ \pure \in \mathbb{N}^n : \sum^n_{i=1} s_i = \thisRes \right\}.
\end{equation}
A pair of strategies $(\purePair)$ is called a \emph{strategy profile}.

Payoff from a single battlefield $i \in [n]$ is defined as
\begin{equation}
\label{pure_single_payoff}
    \aSinglePayoff = -\bSinglePayoff = \sign \! \left( s^\playerA_i - s^\playerB_i \right),
\end{equation}
meaning that a battlefield is won by the player who assigned more resources to it.

Applying~\eqref{pure_single_payoff} to a strategy profile we get a battlefield outcomes vector
\begin{align}\label{outcomes_vector}
    u^\playerA \! \pureProfile = -u^\playerB \! \pureProfile = \left( u_1^\playerA \! \pureProfile \!, \ldots, u_n^\playerA \! \pureProfile \right). 
\end{align}
Values of vector $\outcomesVector$, i.e. the result of~\eqref{outcomes_vector}, are in $\{\shortminus 1,0,1\}^n$.

\begin{definition}[Aggregation function]\label{agg_func} Function \mbox{$\aggfunc$} is an aggregation function if $f$ returns the same value for all permutations of every vector \mbox{$\outcomesVector \!\in\! \{\shortminus 1,0,1 \} ^n$}.
\end{definition}

\begin{observation}[Aggregation functions]\label{obs:agg_func}
Each function $f$ satisfying Definition \ref{agg_func} can be defined as $f\!\left( \outcomesVector \right) \!=\! f_n\!\left(k_W, k_L\right)$, where $k_W$ and $k_L$ describe the number of 1s and -1s in vector $\outcomesVector$.
\end{observation}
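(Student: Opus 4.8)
The plan is to observe that Definition~\ref{agg_func} makes $f$ constant on each orbit of the group of coordinate permutations acting on $\{\shortminus 1, 0, 1\}^n$, and then to identify these orbits explicitly. First I would introduce the equivalence relation on $\{\shortminus 1, 0, 1\}^n$ in which two vectors are related precisely when one is a permutation of the other. Definition~\ref{agg_func} states exactly that $f$ takes the same value on any two related vectors, so $f$ factors through the quotient by this relation; equivalently, the value $f(\outcomesVector)$ depends only on the orbit of $\outcomesVector$.

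The second step is to parametrize these orbits. Two vectors in $\{\shortminus 1, 0, 1\}^n$ are rearrangements of one another if and only if they contain, entry for entry, the same number of each of the three symbols $\shortminus 1$, $0$, and $1$. Writing $k_W$ for the number of $1$s and $k_L$ for the number of $\shortminus 1$s, the number of $0$s is forced to equal $n - k_W - k_L$, so the multiset of entries --- and hence the entire orbit --- is completely determined by the pair $(k_W, k_L)$. Conversely, every pair with $k_W, k_L \geq 0$ and $k_W + k_L \leq n$ is realized by some vector, so the orbits are in bijection with exactly these admissible pairs.

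Combining the two steps, I would define $f_n(k_W, k_L)$ to be the common value of $f$ on the orbit indexed by $(k_W, k_L)$; this is well defined by the first step, and it yields $f(\outcomesVector) = f_n(k_W, k_L)$ for every $\outcomesVector$, which is the claimed reformulation. I do not anticipate a genuine obstacle here: the statement is essentially a bookkeeping fact, and the only point meriting explicit attention is that coordinate-permutation orbits of fixed-length words over a finite alphabet are classified by their symbol counts, together with the remark that the count of $0$s is redundant once $n$, $k_W$, and $k_L$ are fixed. The usefulness of the observation lies not in its difficulty but in the drastic reduction it justifies, since it lets every subsequent argument treat the aggregation as a function of two integer counts rather than of an $n$-dimensional vector.
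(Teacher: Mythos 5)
Your argument is correct and is exactly the reasoning the paper implicitly relies on: the paper states this as an unproved observation precisely because the permutation-invariance of $f$ makes it factor through the symbol counts, which is what you spell out. Your orbit/multiset parametrization (with the count of $0$s determined by $n - k_W - k_L$) is the standard bookkeeping behind the claim, so there is no divergence from the paper's approach.
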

Given the aggregation function $f$ we define the payoffs of each of the players by:
\begin{equation}
\label{pure_payoffs}
    \aPurePayoffArgs = -\bPurePayoffArgs = f\left(u^\playerA \pureProfile\right) .
\end{equation}
As the payoffs of two players always sum up to $0$, the defined class of models describes zero-sum games. Some of the most commonly used aggregation functions are:

\begin{align}
    &f_{blotto}\left(\outcomesVector\right) = \sum_{i=1}^n u_i, \label{blotto_agg} \\
    &f_{mto} \left(\outcomesVector\right) = \sign\left(\sum_{i=1}^n u_i\right), \label{chopstick_agg} \\
    &f_{cho}\left(\outcomesVector\right) = \left[ \sum_{i=1}^n \left[u_i = 1 \right] > n/2 \right] \shortminus \left[ \sum_{i=1}^n \left[u_i = \shortminus 1 \right] > n/2 \right], \label{majoritarian_agg}
\end{align}

where, given a condition $\varphi$, $[\varphi]$ is the Iverson bracket taking value $1$ if $\varphi$ is satisfied and value $0$ otherwise. 

Aggregation function~\eqref{blotto_agg} is the linear function used in Colonel Blotto game~\cite{borel}, \eqref{chopstick_agg} is the ``more than opponent'' function, and \eqref{majoritarian_agg} is the ``majoritarian'' function. 
In general, many more aggregation functions generate nontrivial games.

\begin{definition}[Conflict with multiple battlefields]\label{def:conflict} Quadruple $(\aRes, \bRes, n, f)$ defines a zero-sum two-player game with strategies sets defined by \eqref{pure_strategy} and payoffs defined by \eqref{pure_payoffs}.
\end{definition}
We allow the players to make randomized choices. A \emph{mixed strategy} of player $\examplePlayer$ is a probability distribution on $\thisPureSet$. Given a non-empty set $X$, let $\Delta(X)$ denote the set of all probability distributions on~$X$. The expected payoff to player $\examplePlayer$ from a pair of mixed strategies $(\mixedPair) \in \Delta(\aPureSet) \times \Delta(\bPureSet)$ is equal to
\begin{equation*}
    \thisMixedPayoff = \sum\limits_{(\bm{x}, \bm{z}) \in \aPureSet \times \bPureSet} \xi^\playerA_{\bm{x}} \xi^\playerB_{\bm{z}} \thisPurePayoff \left(\bm{x}, \bm{z} \right).
\end{equation*}
We assume that the players make their decisions ``simultaneously'', i.e. each player chooses her strategy without observing the choice of the opponent. We are interested in (mixed strategy) Nash equilibria (NE) of the game, i.e. mixed strategy profiles $(\mixedPair) \in \Delta(\aPureSet) \times \Delta(\bPureSet)$
such that no player can improve her expected payoff by changing her strategy unilaterally, 
i.e. for each $\examplePlayer$ and all $\bm{\zeta} \in \thisMixedSet$,
\begin{equation*}
\thisMixedPayoff \geq \thisPayoff\left(\bm{\zeta},\bm{\xi}^\thatPlayer\right),    
\end{equation*}
where $\thatPlayer$ denotes the player other than $\thisPlayer$ and $(\bm{\zeta},\bm{\xi}^\thatPlayer)$ is the strategy profile
obtained from $(\mixedPair)$ by replacing $\thisMixed$ with $\bm{\zeta}$. 

\section{Complexity of finding the best pure response}\label{sec:complexity}

Although the complexity of computing Nash equilibria of the games in question (defined by aggregation functions~\eqref{chopstick_agg} and ~\eqref{majoritarian_agg}) remains unknown, we show that finding a pure strategy best response to a given mixed strategy is a computationally hard problem. It is in stark contrast to the Colonel Blotto game, where finding a pure strategy best response is solvable in polynomial time. We prove the result using a reduction from the max coverage problem. This follows the idea in~\cite{BehnezhadBlumDerakhshanHajiaghayiMahdianPapadimitriouRivestSeddighinStark18}, where computing a pure strategy response to a mixed strategy, that maximizes the probability of obtaining payoff above a given threshold in the Colonel Blotto game is shown to be computationally hard.

The hardness result of finding the best response does not necessarily imply any hardness result for the considered game, however, it can be considered an insightful indicator of how hard the game is to solve. See~\cite{Xu2016TheMO} for results about a class of games where the hardness of finding the best response implies the hardness of finding a Nash equilibrium of the game. 

Assume a mixed strategy $\bMixed$ of player $\playerB$, given as a list of pure strategies in its support, and their associated probabilities. The best response problem for a given aggregation function $f$, denoted by $BR_f$, is to find a pure strategy $\aPure$ of player $\playerA$ that maximizes the $\Pi_{f}^{\playerA}(\aPure, \bMixed)$.

\begin{proposition}\label{th:best_response_complexity}
For aggregation functions $f$ given by~\eqref{chopstick_agg} and~\eqref{majoritarian_agg}, there is no polynomial time algorithm to solve $BR_f$ unless $P=NP$.
\end{proposition}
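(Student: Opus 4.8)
The plan is to reduce the NP-hard \emph{maximum coverage} problem to $BR_f$, following the template of~\cite{BehnezhadBlumDerakhshanHajiaghayiMahdianPapadimitriouRivestSeddighinStark18}. It is cleanest to use the graph form of coverage---choose $k$ vertices of a graph $G=(V,E)$ to cover as many edges as possible---since there every element (an edge) lies in exactly two sets (its endpoints). This special case already contains the Vertex Cover problem (take the coverage threshold equal to $|E|$) and is thus NP-hard, and the uniformity ``each edge is in exactly two sets'' is precisely what will let a threshold-type aggregation detect coverage.

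Given $G=(V,E)$ with $m=|V|$ and $r=|E|$, I would construct a conflict whose $n$ battlefields consist of $m$ \emph{vertex battlefields} $B_v$ and a pool of $d$ \emph{dummy battlefields}, together with an explicit mixed strategy $\bMixed$ for $\playerB$ supported on one pure strategy $\bPure_e$ per edge $e$, each with probability $1/r$. In $\bPure_e$, for $e=\{u,w\}$, player $\playerB$ leaves $B_u,B_w$ and all dummies at $0$ and places a value $L>\aRes$ on every other vertex battlefield. I would set $\aRes=d+k$, so that a ``canonical'' pure strategy of $\playerA$ puts one unit on each dummy and on exactly $k$ vertex battlefields, i.e. it selects $k$ vertices. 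Since $L$ exceeds $\playerA$'s entire budget, $\playerA$ can win a battlefield only where $\playerB$ has $0$; hence against $\bPure_e$ she wins the dummies she contests together with the chosen endpoints of $e$, and she loses exactly the $m-2$ vertex battlefields carrying $L$---a constant, because each edge has exactly two endpoints. This element-independent loss count is what makes the construction go through.

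The dummy count is then tuned to the aggregation function so that covering an edge is worth exactly $+1$. For $f_{mto}$ I would take $d=m-2$: with all dummies won, the outcome against $\bPure_e$ is $\sign\bigl((m-2)+w_e-(m-2)\bigr)=\sign(w_e)$, where $w_e\in\{0,1,2\}$ is the number of chosen endpoints of $e$; this is $+1$ iff $e$ is covered and $0$ otherwise. For $f_{cho}$ I would instead take $d=m-1$, so that $n=2m-1$ and $\playerA$ sits one win below the majority threshold $n/2$: winning a single incident vertex pushes her strictly above $n/2$, again giving $+1$ iff $e$ is covered. In both cases $\pi_f^{\playerA}(\aPure,\bPure_e)$ is the indicator that $e$ is covered, so $\Pi_f^{\playerA}(\aPure,\bMixed)=\tfrac1r\cdot(\text{number of covered edges})$ and a best response maximizes coverage---so an oracle for $BR_f$ would solve maximum $k$-vertex cover, \emph{provided} the optimum is attained by a canonical strategy, which is what remains to check.

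The step I expect to be the main obstacle is exactly this verification that the \emph{best} pure response attains the canonical value and that no strategy beats it. Stacking two units on one battlefield, or spending on a battlefield defended by $L$, is plainly wasteful, but the delicate case is $\playerA$ abandoning a dummy in order to contest an extra vertex: this lowers her guaranteed wins and forces her to own \emph{both} endpoints of an edge in order to score $+1$ on it, which turns the count of scored edges into the number of edges induced by a set of $k+1$ vertices. I would close this case with a short exchange argument: deleting any one vertex from a $(k+1)$-set leaves a $k$-set whose incident edges include all edges induced by the original set, so abandoning dummies can never raise the score above the optimal $k$-vertex coverage. Making these monotonicity steps precise, and checking that the single gadget serves both $f_{mto}$ and $f_{cho}$ with only the dummy count changed, is the real work; the reduction and the conclusion that no polynomial-time algorithm solves $BR_f$ unless $P=NP$ then follow.
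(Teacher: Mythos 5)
Your proposal is correct and follows essentially the same route as the paper's proof: a reduction from maximum coverage in which player $\playerB$ mixes uniformly over one pure strategy per element, placing an unmatchably large amount on the battlefields of the non-containing sets, while player $\playerA$'s best response wins a tuned number of dummy battlefields and selects $k$ sets, with the dummy count adjusted separately for $f_{mto}$ and $f_{cho}$. Your only deviations are cosmetic: you obtain element-uniformity by specializing to maximum $k$-vertex coverage (each edge lies in exactly two sets) where the paper instead pads the instance with singletons so every element appears in exactly $t$ sets, and your exchange argument for why abandoning dummies never helps is in fact more explicit than the paper's, which merely asserts the canonical form of the best response.
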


\begin{proof}
    To prove this hardness we provide a reduction from the max-coverage problem to $BR_f$. A number $k$ and a collection of sets $S = \{S_1, S_2, \ldots, S_m\}$ are given. The maximum coverage problem is to find a collection $S' \subseteq S$, such that $|S'| \leq k$ and the number of covered elements (i.e. $|\bigcup_{S_i \in S'}S_i|$) is maximized. 
    
    Let $E = \bigcup_{S_i \in S} S_i$ denote the set of all elements in the given max-coverage instance. Assume, that the number of sets in $S$ in which a given element appears is the same for every element in $E$ and denote this number by $t$. If that is not the case, a polynomial number (with respect to $|E|\cdot|S|)$ of additional singleton sets can be added to collection $S$, without interfering with the max-coverage complexity, or the resulting collection $S'$ as no singleton is more desired than any other sets containing the only element of the considered singleton.

    Consider a game with aggregation function $f$ defined by \eqref{majoritarian_agg}, $2 \cdot |S|$ battlefields, where the player $\playerA$ has $|S| + k$ resources and player $\playerB$ has $(|S| + k + 1) \cdot (|S| - t) \cdot |E|$ resources. The first $|S|$ of $2\cdot |S|$ battlefields correspond to the sets in $S$. For every element $e \in E$, we denote a corresponding pure strategy of player $\playerB$, denoted by $s^{\playerB, e}$, by assigning $|S| + k + 1$ resources to each battlefield that its corresponding set does not contain $e$ and assigning 0 resources in all other battlefields. Assume that player $\playerB$ uses a mixed strategy defined by choosing uniformly from the set of strategies $\bigcup_{e \in E} s^{\playerB, e}$.

    Assume that the best response of player $\playerA$ that maximizes her expected payoff is given. Note, that as player $\playerB$ always assigns either 0 or $|S| + k + 1$ resources to a single battlefield, it is sufficient for the player $\playerA$ to assign either 0 or 1 resources on every battlefield. 
    
    We claim that the best response of player $\playerA$ assigns $|S|$ resources to the remaining $|S|$ battlefields (one to each), which are always assigned 0 by player $\playerB$. This guarantees that player $\playerA$ does not lose with probability 1, as it always wins at least half of the battlefields. The remaining $k$ resources are assigned between the first $|S|$ battlefields, yielding a max-coverage of size $k$ from sets in $|S|$, where each battlefield corresponding to the set in the considered max-coverage will be assigned exactly one resource.

    The proof can be easily adapted for the aggregation function~\eqref{chopstick_agg}. Instead of assigning additional $|S|$ battlefields that are assigned more than 0 by player $\playerB$, we only add $|S| - t$ additional battlefields and we reduce the number of resources of player $\playerA$ from $|S| + k$ to $|S| - t + k$.

\end{proof}

\section{Symmetrized model}
\label{sec:symmetrized}

A player with $\thisRes$ resources to distribute over $n$ battlefields has
\begin{equation}
\label{eq:poly-time}
\binom{n+\thisRes-1}{n-1}  \leq (n+\thisRes-1)^{\min(n-1,\thisRes)}
\end{equation}
pure strategies. Thus, when the number of battlefields or the number of resources of the players are fixed, both players have polynomial size strategy sets and, since the game is zero-sum, a NE can be found in polynomial time~\cite{zero_sum_games_solving}.
It is important to note that the degree of the polynomial that describes the time complexity is limited by the parameter value (the number of battlefields or the number of resources of the two players) which can be arbitrarily large. Moreover, when we consider a model with $n$ battlefields and a fixed parameter $d \in N$, such that the number of resources of both players is $(n+d)$, the number of strategies of the players grows exponentially (c.f.~\eqref{eq:poly-time}). For example, with 20 battlefields and 25 resources, the number of ways in which a player can distribute the resources across the battlefields is more than $10^{12}$.  With such a number of pure strategies, LP solvers, which are standard tools used for solving zero-sum games, become impossible to use in practice.

To address the problem of large strategy sets, we follow the idea of Hart~\cite{hart} and consider a symmetrized variant of the model, defined as follows. Let $P_n$ denote the set of all permutations over $\fieldsSet$. Given a strategy \mbox{$\pure = (s_1, \ldots, s_n) \!\in\! \thisPureSet$} of player $\thisPlayer$ let $\pureSym$ be the mixed strategy which for each permutation $p \in P_n$ over $\fieldsSet$ chooses, with probability $1/n!$, an assignment $p(\pure) = (s_{p(1)}, \ldots,s_{p(n)} )$ of resources. 
\begin{definition} [Symmetrized conflict with multiple battlefields] \label{def:symmetrized_conflicts}
A symmetrized conflict with multiple battlefields is the variant of conflict with multiple battlefields with the same set of players, each player $\examplePlayer$ having a set of strategies $\thisSymSet = \{ \pureSym : \pure \in \thisPureSet\}$ and payoff from strategy profiles 
$(\sigma(\aPure), \sigma(\bPure))\!\!\in\!\!\aSymSet\!\times\!\bSymSet$ 
defined by $\thisPayoff(\sigma(\aPure), \sigma(\bPure))$. 
\end{definition}
The elements of $\thisSymSet$ are called \emph{symmetric strategies}. Since symmetric strategies are order-invariant, when referring to a symmetric strategy $\pureSym$ we will assume that $s_1 \geq \ldots \geq s_n$. Both models (defined in Definitions~\ref{def:conflict} and~\ref{def:symmetrized_conflicts}) describe finite games and therefore, by the Nash theorem, each of these games has a Nash equilibrium in mixed strategies. 

Given a mixed strategy $\mixed \in \thisMixedSet$ of player $\thisPlayer$ in the original game, let $\mixedSym$ be the mixed strategy that for each permutation $p \in P_n$ and each $\pure \in \thisPureSet$ chooses, with probability $\xi_{\pure}/n!$, assignment $(s_{p(1)},\ldots,s_{p(n)})$. 
Notice that for any mixed strategy $\bm{\zeta} \in \Delta(\thisSymSet)$ in the symmetrized game there exists a mixed strategy $\mixed$ in the original game such that 
$\bm{\zeta} = \mixedSym$. We will call the mixed strategies of the symmetrized game \emph{symmetric mixed strategies}.

The following proposition implies that any NE of any game that matches Definition \ref{def:symmetrized_conflicts} is also an NE of a corresponding game described by Definition \ref{def:conflict}.

\begin{proposition}\label{proposition:symm_ne}
For any mixed strategy profile $\mixedProfile$ of the symmetric conflict with multiple battlefields, if $\mixedSymProfile$ is an NE of the symmetrized variant of the game then it is also an NE of the original game.
\end{proposition}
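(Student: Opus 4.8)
The plan is to leverage the permutation symmetry of the payoff to show that, \emph{against a symmetric opponent}, replacing any deviation by its symmetrization leaves the payoff unchanged. This is precisely the gap between the two equilibrium notions: being an NE of the symmetrized game only rules out profitable deviations to \emph{symmetric} strategies, whereas being an NE of the original game must rule out profitable deviations to \emph{arbitrary} mixed strategies. If every original-game deviation $\bm{\zeta}$ yields the same payoff as its symmetrization $\sigma(\bm{\zeta})$ against the fixed symmetric opponent, then a profitable original deviation would give an equally profitable symmetric one, contradicting the assumed equilibrium property of $\mixedSymProfile$.

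First I would record the permutation invariance of the pure payoff: for every permutation $p \in P_n$ and every profile $\pureProfile$,
\[
  \thisPurePayoff\left(p(\aPure), p(\bPure)\right) = \thisPurePayoffArgs .
\]
Applying the same $p$ to both strategies permutes the battlefield outcomes vector $\outcomesVector$ by $p$, hence preserves the numbers $k_W$ and $k_L$ of won and lost battlefields; by Observation~\ref{obs:agg_func} the aggregation function $f$ depends only on $k_W$ and $k_L$, so the payoff is unchanged. This holds for $\thisPlayer = \playerA$ and, since player $\playerB$'s payoff is the negation of player $\playerA$'s, for $\thisPlayer = \playerB$ as well.

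Second --- the crux --- I would upgrade this to invariance of the expected payoff against a symmetric opponent under permuting one's own strategy, i.e. (writing player $\thisPlayer$'s strategy in the first slot and the opponent's in the second)
\[
  \thisPayoff\left(p(\thisPure), \thatPureSym\right) = \thisPayoff\left(\thisPure, \thatPureSym\right) .
\]
Here I would expand $\thatPureSym$ as the uniform mixture over $\{q(\thatPure) : q \in P_n\}$, rewrite each term using the first step as $\thisPurePayoff(p(\thisPure), q(\thatPure)) = \thisPurePayoff(\thisPure, p^{-1}q(\thatPure))$, and observe that as $q$ runs over $P_n$ so does $p^{-1}q$; reindexing the sum then collapses it back to $\thisPayoff(\thisPure, \thatPureSym)$. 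This reindexing is where the symmetry of the opponent's mixed strategy is indispensable, and it is the only non-routine step. Averaging this equality over $p$ and using that $\thisPureSym$ is by definition the uniform mixture of the $p(\thisPure)$ gives $\thisPayoff(\thisPureSym, \thatPureSym) = \thisPayoff(\thisPure, \thatPureSym)$, and extending by linearity over mixtures in both arguments yields, for every $\bm{\zeta} \in \thisMixedSet$,
\[
  \thisPayoff\left(\sigma(\bm{\zeta}), \thatMixedSym\right) = \thisPayoff\left(\bm{\zeta}, \thatMixedSym\right) .
\]

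Finally I would conclude. Fix a player $\thisPlayer$ and an arbitrary deviation $\bm{\zeta} \in \thisMixedSet$ in the original game. Its symmetrization $\sigma(\bm{\zeta})$ is a symmetric mixed strategy (as noted before the statement, the symmetric mixed strategies are exactly those of the form $\mixedSym$), so the equilibrium property of $\mixedSymProfile$ in the symmetrized game gives $\thisPayoff(\thisMixedSym, \thatMixedSym) \geq \thisPayoff(\sigma(\bm{\zeta}), \thatMixedSym)$, and the identity from the second step rewrites the right-hand side as $\thisPayoff(\bm{\zeta}, \thatMixedSym)$. Since $\bm{\zeta}$ was arbitrary and the argument is identical for both players, $\mixedSymProfile$ satisfies every original-game NE inequality, which is the claim.
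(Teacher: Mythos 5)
Your proposal is correct and follows essentially the same route as the paper's own argument (the standard Hart-style symmetrization): permutation invariance of the pure payoffs, the reindexing step showing that against a symmetric opponent a strategy and its symmetrization earn the same expected payoff, and then transferring the equilibrium inequality from symmetric deviations to arbitrary deviations. The reindexing over $p^{-1}q$ is indeed the crux, and you execute it and the final linearity/NE step correctly.
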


\begin{proof}
First, observe that for any player $\examplePlayer$, any aggregation function $f$, any two mixed strategies $\mixed \in \thisMixedSet$, of $\thisPlayer$ and $\bm{\zeta} \in \thatMixedSet$ of the other player and any permutation $q \in P_n$,
\begin{equation}
\label{eq:symm_payoff}
\begin{aligned}
\thisPayoff(\mixedSym, q(\bm{\zeta}))
& =  \sum_{p \in P_n} \sum_{\bm{x}\in \thisMixedSet}\sum_{\bm{z}\in \thatMixedSet} \frac{\xi_{\bm{x}}}{n!} \zeta_{\bm{z}}  \pi^{\thisPlayer}_f \! \left(p\left(\bm{x}\right),q\left(\bm{z}\right)\right)\\
& =  \sum_{p \in P_n} \sum_{\bm{x}\in \thisMixedSet}\sum_{\bm{z}\in \thatMixedSet} \frac{\xi_{\bm{x}}}{n!} \zeta_{\bm{z}}  \pi^{\thisPlayer}_f \! \left(p\left(q^{-1}\!\left(\bm{x}\right)\right),\bm{z}\right)\\
 =  \sum_{p \in P_n} & \sum_{\bm{x}\in \thisMixedSet} \sum_{\bm{z}\in \thatMixedSet} \frac{\xi_{\bm{x}}}{n!} \zeta_{\bm{z}} \pi^{\thisPlayer}_f \left(p\left(\bm{x}\right),\bm{z}\right) = \thisPayoff\left(\mixedSym, \bm{\zeta}\right).
\end{aligned}
\end{equation}
Take any mixed strategy profile $(\mixedPair)$ of the game in question and suppose that $\mixedSymProfile$ is a MNE of the symmetrized variant of the game. Assume, to the contrary, that it is not a MNE of the original game. It means that there exists a player $\examplePlayer$ and a mixed strategy $\bm{\zeta} \in \thisMixedSet$ of $\thisPlayer$ such that $\thisPayoff\left(\bm{\zeta},\sigma\left(\bm{\xi}^{\thatPlayer}\right)\right) \!\! > \!\! \thisPayoff\mixedSymProfile$. By~\eqref{eq:symm_payoff}, $\thisPayoff\left(\bm{\zeta},\thatMixedSym\right) \!\! = \!\! \thisPayoff\left(\sigma\left(\bm{\zeta}\right),\sigma\left(\thatMixed\right)\right)$. Hence $\thisPayoff\left(\sigma\left(\bm{\zeta}\right),\thatMixedSym\right) \!\! > \!\! \thisPayoff \mixedSymProfile$. Since $\sigma(\bm{\zeta})$ is a symmetric strategy, this contradicts the assumption that the profile $\mixedSymProfile$ is a MNE of the symmetrized variant of the game. Thus the claim of the proposition must hold.

\end{proof}

The advantage of considering the symmetrized models is that the strategy sets of the players are reduced by the exponential factor, which is presented in the following figure.

\begin{figure}
\centering
\begin{subfigure}[t]{0.5\columnwidth}
    \includegraphics[width=1.7in]{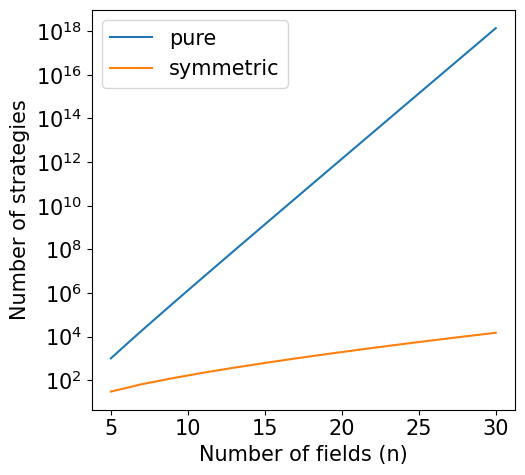}
    \caption{The number of pure and symmetric strategies for $n+5$ resources.}
    \label{subfig:comparison}
    \Description{This shows the comparison of the number of pure and pure symmetrized strategies of a player when the number of resources is equal to the number of battlefields incremented by 5. The range of the number of battlefields is from 5 to 30, and the y-axis scale is logarithmic.}
\end{subfigure}%
\begin{subfigure}[t]{0.5\columnwidth}
    \includegraphics[width=1.7in]{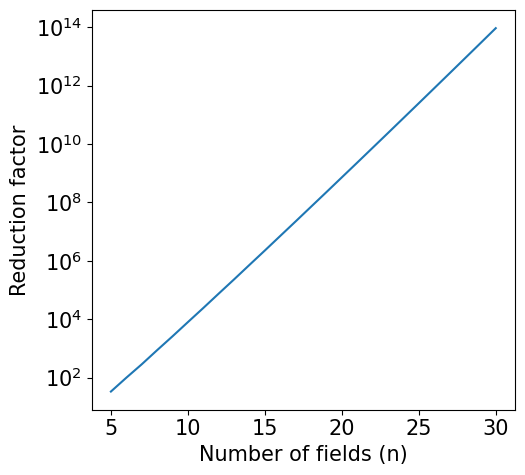}
    \caption{The reduction factor.}
    \label{subfig:reduction}
    \Description{This shows the reduction factor meaning the number of pure strategies divided by the number of pure symmetrized strategies of a player when the number of resources is equal to the number of battlefields incremented by 5. The range of the number of battlefields is from 5 to 30, and the y-axis scale is logarithmic.}
\end{subfigure}
\caption{Comparison of number of pure and pure symmetric strategies}
\label{fig:reduction}
\end{figure}

Figure~\ref{subfig:comparison} shows the comparison of the number of the pure strategies and the number of the symmetric strategies, when the number of resources is a linear function of the number of battlefields (here the number of battlefields increased by 5). The vertical axis shows the number of strategies using a logarithmic scale. By Eq.~\eqref{eq:poly-time}, when only the number of battlefields or only the number of resources grows, the size of strategy sets (both pure and symmetric) grows polynomially. Only when both the parameters grow together at the same time, the size of strategy sets grow exponentially.

Figure~\ref{subfig:reduction} shows the ratio of the number of pure strategies to the number of symmetric strategies (the reduction factor) for data in Figure~\ref{subfig:reduction}. Notice that although the number of symmetric strategies is still exponential, the reduction is by an exponential factor with rescpect to the parameters of the model.

To realize that the number of symmetrized strategies is still exponential with respect to the model parameters when the number of resources of a player is a linear function of the number of the battlefields, note that, as shown in \cite{partitions}, the number $p(n)$ of different partitions of $n$ into the sum of non-negative integers has asymptotic growth of:
\begin{equation}
    p(n) \sim \frac{1}{4n\sqrt{3}} \cdot \exp\left(\pi \sqrt{\frac{n}{2}}\right).
\end{equation}
Consider a model with $n$ battlefields and a fixed parameter $d \in N$, such that the number of resources of both players is $(n+d)$. The number of symmetrized strategies of each player is a number of partitions of $(n+d)$ into at most $n$ parts. This is larger than the number of partitions of $n$ into at most $n$ parts, which is exactly $p(n)$. Therefore, the number of symmetrized strategies grows exponentially with respect to the number of battlefields in such a setting.

\section{Computing payoffs from symmetric strategies}
\label{sec:clash_matrix}
To benefit from restricting to the symmetrized variant of the model, we need to be able to efficiently compute single payoffs from symmetric strategy profiles $\symmProfile$. 

Notice that payoff to player $\examplePlayer$ from a strategy profile $\symmProfile \in \aSymSet \!\times\! \bSymSet$ is given by:
\begin{multline}
    \thisSymPayoff = \frac{1}{\left(n!\right)^2} \sum_{q \in P_n} \sum_{p \in P_n} \thisPurePayoff \! \left(q\left(\aPure\right)\!, p\left(\bPure\right) \right)\! = \\ \frac{1}{n!} \sum_{p \in P_n} \thisPurePayoff \left(\aPure, p\left(\bPure\right) \right).
\end{multline}\label{eq:naive_payoff_2}

To calculate the payoff from a pair of symmetric strategies $\symmProfile$ in a na\"{i}ve way, we need to calculate the payoff of the general model for each permutation. This makes calculating the payoff matrices of the symmetrized games almost as costly as calculating the payoff matrices of unsymmetrized games. To address this issue, we provide an algorithm that computes a single payoff from a pair of symmetric strategies in polynomial time with respect to $n$.

\subsection{Clash matrix}
By a clash matrix of a pair of symmetric strategies $\symmProfile$ we mean a matrix of dimension $n \times n$ defined as follows:
\begin{equation*}
\clashMatrix_{i,j} = \sign\left(s^{\playerA}_i - s^{\playerB}_j\right).
\end{equation*}
That is, the matrix cell at coordinates $(i,j)$ stores information (difference sign) of the result of clashing the $i$-th resource in player $\playerA$'s vector $\aPure$ with the $j$-th resource in player $\playerB$'s vector $\bPure$.

\begin{example}[Clash matrix]
\renewcommand{\kbldelim}{(}
\renewcommand{\kbrdelim}{)}
\[
  \clashM^{(3,1,0),(2,2,0)} = \kbordermatrix{
    & \mathbin{\phantom{-}}2 & \mathbin{\phantom{-}}2 & \mathbin{\phantom{-}}0 \\
    3 & \mathbin{\phantom{-}}1 & \mathbin{\phantom{-}}1 & \mathbin{\phantom{-}}1 \\
    1 & -1 & -1 & \mathbin{\phantom{-}}1 \\
    0 & -1 & -1 & \mathbin{\phantom{-}}0
  }
\]
\end{example}

Note that each permutation in formula \eqref{eq:naive_payoff_2} can be used to select a subset of elements in matrix $\clashMatrix$ in such a way that exactly one element is selected from each row and each column. This is illustrated by the following formula:

\begin{multline}\label{eq:payoff_redef}
    \aSymPayoff = \frac{1}{n!} \sum_{p \in P_n}  \aPurePayoff \left(\aPure, p\left(\bPure\right) \right) = \\
    \frac{1}{n!} \sum_{p \in P_n} f \left(\clashMatrix_{1,p(1)}, \ldots, \clashMatrix_{n,p(n)}  \right) 
\end{multline}

We can think of the problem of computing a single payoff from a pair of symmetric strategies as follows. A matrix $\clashMatrix$ is a chessboard of dimension $n \times n$, divided into three distinct areas:
\begin{align*}
W^{\purePair} & = \{(i,j) \in [n] \times [n] : \clashMatrix_{i,j} = \phantom{-}1\},\\
T^{\purePair} & = \{(i,j) \in [n] \times [n] : \clashMatrix_{i,j} = \phantom{-}0\},\\
L^{\purePair} & = \{(i,j) \in [n] \times [n] : \clashMatrix_{i,j} = -1\}.
\end{align*}
Each permutation $p \in P_n$ corresponds to distributing $n$ rooks on the mentioned chessboard of dimension $n \times n$ in such a way that no two rooks attack each other. Each row contains one rook and the rook in the $i$'th row is located in $p(i)$'th column. For $D \!\in\! \{W,T,L\}$, let $K_D(p \!\mid\! \clashM)$ denote the number of rooks distributed in area $D$ of clash matrix $\clashM$ by permutation $p$.
By Observation~\ref{obs:agg_func}, it follows that:
\begin{align} \label{eq:payoff_with_areas}
    \aPurePayoff \left( \aPure, p\left( \bPure \right)\right)=f_n\!\left( K_W\! \left( p \!\mid\! \clashM \right)\!, K_L\! \left( p \!\mid\! \clashM \right)\right).
\end{align}

Let $h( k_W, k_L \!\mid\! \clashM )$ be the number of permutations $p \in P_n$ that place exactly $k_W$ rooks in area $W$ and $k_L$ rooks in area $L$ for a given clash matrix $\clashM$, i.e.
\begin{align*}
h\!\left( k_W, k_L \!\mid\! \clashM \right) \!=\!\! \sum_{p \in P_n} \!\!\left[ K_W\!\left(p \!\mid\! \clashM \right) \!=\! k_W \land K_L\!\left(p \!\mid\! \clashM\right) \!=\! k_L \right]\!.
\end{align*}
Using function $h$ and \eqref{eq:payoff_with_areas} we can redefine payoff \eqref{eq:payoff_redef}:
\begin{equation}
    \aSymPayoff = \frac{1}{n!} \sum\limits_{k_W = 0}^n \sum\limits_{k_L = 0}^n h\!\left(k_W,k_L\right) f_n\!\left(k_W, k_L\right)\!.
\end{equation}
For better clarity, we omit $\clashMatrix$ parameter in $h$ and $f_n$ notation. In the following subsections, we show how to compute values of function $h$ for a given clash matrix. 

\subsection{Properties of clash matrices}
Note that due to the non-increasing ordering of the strategy vectors of both players, each column and row of a clash matrix describes a monotonic sequence of length $n$. Using this observation, we conclude that all clash matrices share a very particular structure. Area $W$ occupies the upper part of the matrix, potentially reaching lower and lower in each successive column, and area $T$ consists of rectangles, perhaps touching one another at the corners but never sharing a side. Example \ref{ex:cutoffs} illustrates such a matrix.

When looking for the recursive formula for function $h$ we will be ``cutting off'' certain parts of the matrix (submatrix) with a certain number of distributed rooks and thus reducing the size of the considered matrix (submatrix).

\begin{example}[Successive cut-offs of the clash matrix]
\label{ex:cutoffs} 
An example of the clash matrix with the successive ``cut-offs'', marked by their numbers is shown in Figure \ref{fig:clash_matrix}. Areas are colored as follows: yellow ($L$), blue ($T$), and red ($W$). An example of a strategy pair that yields the considered clash matrix is $s_A = (8,8,6,5,4,2,1,1,0), s_B = (8,8,8,7,5,3,3,1,0)$.

\begin{figure}[h]
\begin{center}
\resizebox{150pt}{150pt}{
\begin{tikzpicture}
\filldraw[yellow!50] (0,1) rectangle (9,10);

\filldraw[red!50] (7,2) rectangle (9,10);
\filldraw[red!50] (5,5) rectangle (9,10);
\filldraw[red!50] (3,8) rectangle (9,10);
\filldraw[red!50] (4,7) rectangle (9,10);

\filldraw[cyan!50] (8,1) rectangle (9,2);
\filldraw[cyan!50] (7,2) rectangle (8,4);
\filldraw[cyan!50] (0,8) rectangle (3,10);
\filldraw[cyan!50] (4,6) rectangle (5,7);

\draw[black, very thick] (0,1) rectangle (9,10);



\draw[black, very thick] (0, 2) -- (8,2) -- (8, 10);
\draw[black, very thick] (0, 4) -- (7,4) -- (7, 10);
\draw[black, very thick] (0, 5) -- (7,5);
\draw[black, very thick] (5, 5) -- (5, 10);
\draw[black, very thick] (0, 6) -- (5,6);
\draw[black, very thick] (0, 7) -- (4,7) -- (4, 10);
\draw[black, very thick] (0, 8) -- (4,8);
\draw[black, very thick] (3, 8) -- (3, 10);

\node [anchor=center, scale=2] (note) at (1.5,9) {9};
\node [anchor=center, scale=2] (note) at (4.5,6.5) {6};
\node [anchor=center, scale=2] (note) at (7.5,3) {2};
\node [anchor=center, scale=2] (note) at (8.5,1.5) {1};

\node [anchor=center, scale=2] (note) at (3.5,4.5) {3};
\node [anchor=center, scale=2] (note) at (2.5,5.5) {5};
\node [anchor=center, scale=2] (note) at (2,7.5) {7};
\node [anchor=center, scale=2] (note) at (6,7.5) {4};
\node [anchor=center, scale=2] (note) at (3.5,9) {8};

\node [anchor=center, scale=2] (note) at (0.5, 10.5) {8};
\node [anchor=center, scale=2] (note) at (1.5, 10.5) {8};
\node [anchor=center, scale=2] (note) at (2.5, 10.5) {8};
\node [anchor=center, scale=2] (note) at (3.5, 10.5) {7};
\node [anchor=center, scale=2] (note) at (4.5, 10.5) {5};
\node [anchor=center, scale=2] (note) at (5.5, 10.5) {3};
\node [anchor=center, scale=2] (note) at (6.5, 10.5) {3};
\node [anchor=center, scale=2] (note) at (7.5, 10.5) {1};
\node [anchor=center, scale=2] (note) at (8.5, 10.5) {0};

\node [anchor=center, scale=2] (note) at (-0.5, 9.5) {8};
\node [anchor=center, scale=2] (note) at (-0.5, 8.5) {8};
\node [anchor=center, scale=2] (note) at (-0.5, 7.5) {6};
\node [anchor=center, scale=2] (note) at (-0.5, 6.5) {5};
\node [anchor=center, scale=2] (note) at (-0.5, 5.5) {4};
\node [anchor=center, scale=2] (note) at (-0.5, 4.5) {2};
\node [anchor=center, scale=2] (note) at (-0.5, 3.5) {1};
\node [anchor=center, scale=2] (note) at (-0.5, 2.5) {1};
\node [anchor=center, scale=2] (note) at (-0.5, 1.5) {0};


\end{tikzpicture}
}
\caption{Clash matrix with successive cut-offs}
\label{fig:clash_matrix}
\Description{This figure shows an example of the clash matrix divided into three regions - W, L, and T. Successive cut-offs considered in the clash matrix algorithm are shown and numbered - from 1 to 9. }
\end{center}
\end{figure}
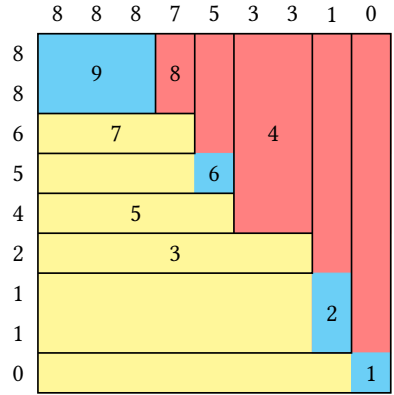
\end{example}

\subsection{Recursive formula}

In this section, we derive a recursive formula describing the number of possible distributions of rooks that do not attack one another on a chessboard (clash matrix) with designated areas $W$, $L$, and $T$. Let:
\begin{itemize}
    \item[] $\clashM_{|(i,j)}$ -- the submatrix of $\clashM$ consisting of the intersection of the first $i$ rows and first $j$ columns of $\clashM$.
    \item[] $H(i,j,m,k_W, k_L \mid \clashM)$ -- the number of ways in which $m$ rooks can be distributed in the intersection of the first $i$ rows and $j$ columns of the clash matrix $\clashM$, such that there are exactly $k_W$ rooks in area $W$ and exactly $k_L$ rooks are in area $L$.
    \item[] $R(i,j,t) = \binom{j}{t} \binom{i}{t} t!$ -- the number of ways to distribute $t$ rooks in a uniform area with $i$ rows and $j$ columns.
\end{itemize}
From the definition of $H$ it follows that 
\begin{equation*}
  h\left(k_W,k_L \mid \clashM \right) = H\left(n,n,n,k_W,k_L \mid \clashM \right).  
\end{equation*}

To find the recursion, in each step, we choose the number of columns and rows so as to cover the entire coherent section of $T$ lying in the right lower corner. If the corner lies in $L$, we say that the width of the considered section of $T$ is $0$. When the corner is in $W$, we say that the height of the considered section of $T$ is $0$. 
The recursive formula is expressed as follows:
\begin{align*}
 H_1 &\left(i,j,m,k_W,k_L \mid \clashM\right) = \\
 & \sum_{\substack{r_1,r_2,r_3 \geq 0 \\ r_1+r_2+r_3 \leq m}}  H(i',j',m-r_{s}, k_W - r_3, k_L - r_1 \mid \clashM_{|(i',j')})  \\
 & \cdot R(i - i', j' - (m - r_{s}), r_1)  \cdot R(i-i'-r_1, j - j', r_2) \cdot \\
 & R(i' - (m-r_{s}), j-j'-r_2, r_3),
\end{align*}
where $r_{s} = r_1 + r_2+r_3$. 
To show the recursion, we consider every possible division of $m$ rooks into four groups, as indicated below in Figure~\ref{fig:cut-offs}. When the height or width of the considered section of $T$ is $0$, two of these groups have a size of $0$.

\begin{figure}[h]
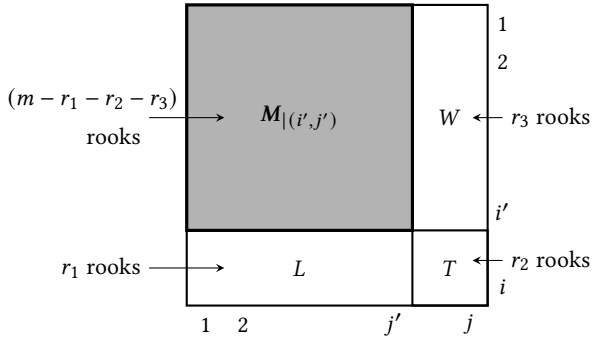

\begin{center}
\subfile{parts/variant_III}  
\caption{Cutting-off procedure}
\label{fig:cut-offs}
\Description{This figure shows a division of m rooks into four regions considered during the cut-off procedure}
\end{center}
\end{figure}

We can think of it as follows:
\begin{enumerate}
    \item We place $(m-r_1-r_2-r_3)$ rooks in area $\clashM_{|(i',j')}$.
    \item In the rectangle lying in $L$ (whose $(m-r_1-r_2-r_3)$ columns have already been excluded in (1)) we place $r_1$ rooks in $i-i'$ free rows and $j' - (m-r_1-r_2-r_3)$ free columns.
    \item In the rectangle lying in $T$ (whose $r_1$ rows have already been excluded in (2)), we place $r_2$ rooks in $i-i'-r_1$ free rows and $j - j'$ free columns.
    \item In the rectangle lying in $W$ (whose $(m-r_1-r_2-r_3)$ rows and $r_2$ columns have already been excluded in (1) and (3)) we place $r_3$ rooks in $i' - (m-r_1-r_2-r_3)$ free rows and $j-j'-r_2$ free columns.
    \item As $r_3$ rooks were placed in $W$ and $r_1$ rooks were placed in $L$, we subtract those values from $k_W$ and $k_L$ respectively in the recursive call of $H$ in point (1). 
\end{enumerate}
Using the formula described above, we will eventually arrive at a situation where submatrix $\clashM$ is entirely within one of the three areas under consideration. Then:
\begin{multline*}
   H_0(i,j,m,k_W,k_L \mid \clashM) = \\  
    \begin{cases}
        R(i,j,m), & \textrm{if $k_W=m$ and $k_L = 0$ and $\clashM \subseteq W$,} \\
        R(i,j,m), & \textrm{if $k_L=m$ and $k_W = 0$ and $\clashM 	\subseteq L$,} \\
        R(i,j,m), & \textrm{if $k_L=k_W = 0$ and $\clashM \subseteq T$,} \\
        0, & \textrm{otherwise.}
    \end{cases}
\end{multline*}

\subsection{Dynamic algorithm}
Values of the recursive formula described above can be computed using dynamic programming using Algorithm~\ref{alg:dynamic}. 
\begin{algorithm}[t]
\SetAlgoLined
\KwIn{Two non-increasing vectors $\aPure$ and $\bPure$}
\KwResult{Payoff value from a given profile.}

\tcc{initialization}
Determine the clash matrix\;
Determine the array $knots[]$, containing pairs of coordinates of consecutive points at which the ``cut-off'' occurs, in ascending order\;
Allocate a $4$-dimensional array $values[]$, of size $2n \times (n+1) \times (n+1) \times (n+1)$\;

 \For{$knotIndex \leftarrow 0$ \KwTo $length(knots) - 1$}
 {
    $knot \leftarrow knots[knotIndex]$\;
    $i \leftarrow knot[0]$\;
    $j \leftarrow knot[1]$\;
    \For{$rooksNum \leftarrow 0$ \KwTo $\min(i,j)$}
    {
        \For{$k_W \leftarrow 0$ \KwTo $\min(i,j)$} 
        {
            \For{$k_L \leftarrow 0$ \KwTo $\min(i,j)$} 
            {
                $values[knotIndex, rooksNum, k_W, k_L] \leftarrow H(i,j,rooksNum,k_W,k_L)$\;
                 \tcc{function H uses $value[]$ and clash matrix}
            }
        }
    }
 }
 $lastKnot \leftarrow length(knots)-1$\; 
 \tcc{last knot is always $(n,n)$}
 $result = 0$\;
 \For{$k_W \leftarrow 0$ \KwTo $n$} 
{
    \For{$k_L \leftarrow 0$ \KwTo $n$} 
    {
        $result += values[lastKnot, n, k_W, k_L] \cdot f_n(k_W, k_L)$
    }
}
 \Return $result/factorial(n)$\;
 \caption{The dynamic algorithm for calculating payoff from a pair of symmetric strategies}\label{alg:dynamic}
\end{algorithm}
Algorithm~\ref{alg:dynamic} runs in memory limited by $O(n^4) = 2n \cdot (n+1) \cdot (n+1) \cdot (n+1)$ and requires no more than $O(n^7) = 4 \cdot 2n \cdot (n+1) \cdot (n+1) \cdot (n+1) \cdot \binom{n+3}{3}$ arithmetic operations.

\section{Double Oracle Algorithm}
\label{sec:doa}
In Section~\ref{sec:clash_matrix} we presented an algorithm for computing single payoffs from symmetric strategy profiles in polynomial time with respect to the model parameters. Sizes of the sets of symmetric strategies of both players, although much smaller, are still exponential with respect to the model parameters. Because of that, LP based methods for solving zero-sum games, which require calculating the whole payoff matrix, are still inefficient for large parameters values. For that reason we use the Double Oracle Algorithm (DOA), described by Algorithm~\ref{alg:doa}, which can be used for finding NE of zero-sum games without calculating the whole payoff matrix. Double Oracle Algorithm correctness was proven in~\cite{double_oracle_algorithm}.

\subsection{Oracle}
When using DOA, one has to use an \textit{oracle} that for a given mixed strategy of a player $\thisPlayer$ returns a best-response pure strategy of the opponent. Given a mixed strategy $\thisMixed$ of player $\thisPlayer$, we consider all the strategy profiles $( \thisMixed, \thatPure )$, where $\thatPure$ is a pure strategy of player $\thatPlayer$, and find a pure strategy that maximizes payoff for player $\thatPlayer$. To compute these payoffs we use Algorithm 1. Given two sets of pure strategies, $X^{\playerA}$ and $X^{\playerB}$, of players $\playerA$ and $\playerB$, respectively, by $\text{CoreLP}(X^{\playerA}, X^{\playerB} )$ we mean a strategy profile that is NE of the zero-sum game restricted to the pure strategies in $X^{\playerA}$ and $X^{\playerB}$, found by an LP-solver.

\begin{algorithm}
\SetAlgoLined
\KwResult{NE of the zero-sum game}

\tcc{initialization}
1. Initialize $X^{\playerA}$ with one pure strategy of player $\playerA$\;
2. Initialize $X^{\playerB}$ with one pure strategy of player $\playerB$\;
 \Repeat{convergence}{
    $\mixedProfile$ $\leftarrow$ CoreLP$\left(X^{\playerA}, X^{\playerB} \right)$\;
    $X^{\playerA} \leftarrow X^{\playerA} \cup \{ best\_response(\bMixed)\}$\;
    $X^{\playerB} \leftarrow X^{\playerB} \cup \{ best\_response(\aMixed) \}$\;
}
 \Return $\mixedProfile$\;
 \caption{The Double Oracle Algorithm for solving zero-sum games}\label{alg:doa}
\end{algorithm}
As we can choose the initial strategies of both players arbitrarily, we decided to start with the most even assignment between all the battlefields for each player.

\subsection{Proposed heuristic}
\label{sec:heuristic}

In this section we propose a simple, but effective, heuristic that allows us to avoid unnecessary calculations for a subset of strategies.
We first prove a simple observation regarding the pure strategies that are the best responses to a mixed strategy of the opponent. By a \emph{best response} of player $\thisPlayer$ to a mixed strategy $\thatMixed$ of the opponent we mean a pure strategy $\thisPure_{best}$ that maximizes the payoff of $\thisPlayer$ against $\thatMixed$.
\begin{definition}[Maximal assignment]
We define $\maxass(\thisMixed) \in N$, of player $\thisPlayer$ as
\begin{equation}
\max_{\substack{s \in \thisPureSet \\ \thisMixed\left(s\right) > 0}} \max_{i \in \{1,\ldots,n\}} s_i,
\end{equation}
i.e. the biggest number of resources that player $\thisPlayer$ assigns to single battlefield with positive probability when playing the strategy $\thisMixed$.
\end{definition}

Let $f$ be an aggregation function that is monotonic non-decreasing in its first argument and monotonic non-increasing in its second argument, meaning that a player cannot decrease her payoff by increasing the number of battlefields won and cannot increase her payoff when the number of battlefields won by the opponent increases. When the disproportion of resources between the players, meaning $|\aRes - \bRes|$ is not greater then $n$, the following proposition always holds. 
\begin{proposition} \label{proposition:max_assign_response}
For any mixed strategy $\thisMixed$ of player $\thisPlayer$ and any aggregation function $f$ as described above, there exists a best response pure strategy $\thatPure_{best}$ of player $\thatPlayer$, that satisfies
\begin{equation}\label{maxass_theorem}
    \maxass(\thatPure_{best}) \leq \maxass(\thisMixed) + 1.
\end{equation}
\end{proposition}

\begin{proof}
Take any pure strategy $\thatPure$ that maximizes payoff of player $\thatPlayer$ against a given mixed strategy $\thisMixed$. If this strategy does not satisfy~\eqref{maxass_theorem} then we can reduce the number of resources at battlefields where the number of resources exceeds $(\maxass(\thisMixed) + 1)$ to $(\maxass(\thisMixed) + 1)$, as those battlefields are still always won by $\thatPlayer$, and increase the number of resources at the battlefields where the current assignment is below $(\maxass(\thisMixed) + 1)$. As the aggregation function is monotonic non-decreasing in its first argument and monotonic non-increasing in its second argument, the payoff of player $\thatPlayer$ does not decrease after such a redistribution. 
\end{proof}

By Proposition \ref{proposition:max_assign_response}, when computing a best response to a given mixed strategy $\thisMixed$, we can restrict attention to the pure strategies of the opponent for which the $\maxass$ value exceeds the $\maxass$ value of the mixed strategy $\xi^C$ by at most one. This significantly reduces the number of necessary calculations and therefore the time required to compute the NE of the game in question.

\section{Experimental evaluation} 
\label{sec:evaluation}
We implemented two versions of Algorithm~1 for finding payoff matrices of the symmetrized games with the aggregation functions, one given by~\eqref{majoritarian_agg} and one given by~\eqref{chopstick_agg}, using $C\texttt{++}$'s \emph{vector} class from $C\texttt{++}$'s standard library. We compared the acquired run times of the algorithm with CPU and GPU-based approaches that calculate payoffs based on \eqref{eq:naive_payoff_2} for finding payoff matrices of the symmetrized game. Experiments were run on a computational cluster XXXX (anonymization) with a central processing unit (CPU) Intel Xeon E5-2640 v4 (10 cores) and the GPU unit Titan V. Four cores of the processor were used while conducting each of the experiments. All the time results in the experiments are averages from $10$ runs of the program. The program we used to obtain payoff matrices with CPU and GPU is proposed and described in \cite{litwin}.
\begin{figure}
\centering
\begin{subfigure}{0.5\columnwidth}
    \includegraphics[width=1.69in]{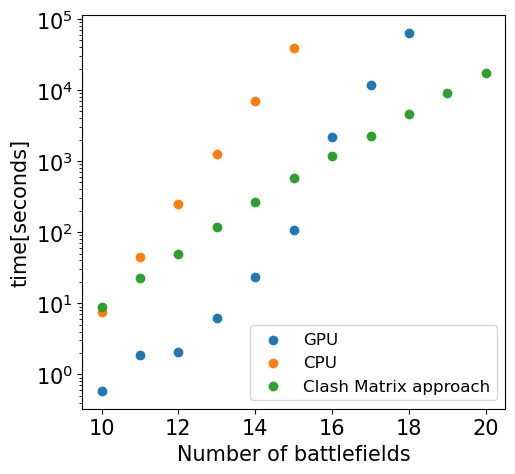}
    \caption{More than opponent}
    \label{subfig:runtimes_mto}
    \Description{This figure provides the comparison of runtimes of different algorithms used for calculating the entire payoff matrix of a symmetrized "more than opponent" model. The yellow dots provide the runtime of a na\"{i}ve CPU algorithm, the blue dots provide the runtime of a na\"{i}ve algorithm using GPU and the green dots provide the runtime of a Clash matrix algorithm. The y-axis scale is logarithmic}
\end{subfigure}%
\begin{subfigure}{0.5\columnwidth}
    \includegraphics[width=1.69in]{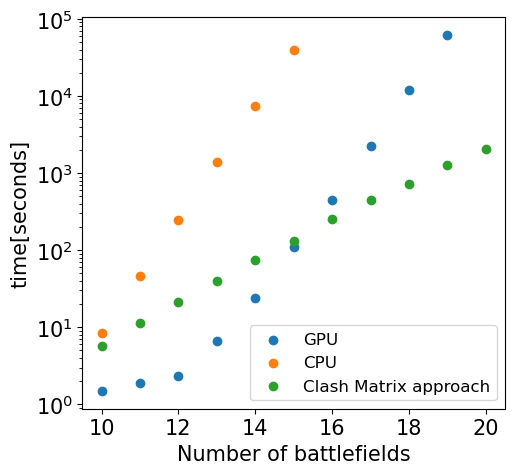}
    \caption{Majoritarian}
    \label{subfig:runtimes_maj}
    \Description{This figure compares runtimes of different algorithms used for calculating the entire payoff matrix of a symmetrized "majoritarian" model. The yellow dots provide the runtime of a na\"{i}ve CPU algorithm, the blue dots provide the runtime of a na\"{i}ve algorithm using GPU and the green dots provide the runtime of a Clash matrix algorithm. The y-axis scale is logarithmic}
\end{subfigure}
\caption{Run times to the number of battlefields}
\label{fig:runtimes}
\end{figure}
Figure \ref{fig:runtimes} shows the comparison of the times required by all three methods for finding the whole payoff matrices of both games. The numbers of battlefields are shown on the horizontal axis. The number of resources of each player is a linear function of the number of battlefields (the number of battlefields increased by $5$). The vertical axis shows the run times of the programs using a logarithmic scale. The time limit for each experiment was set to $10^5$ seconds. 

When considering the \textit{more than opponent} (Figure \ref{subfig:runtimes_mto}) game, the maximal value of parameters that GPU based approach was able to calculate the payoff matrix for was $23$ resources to distribute over 18 battlefields. The achieved speedup of the clash matrix for the same parameters was 18 times. For the \textit{majoritarian} (Figure \ref{subfig:runtimes_maj}) game, the maximal value of parameters that GPU based approach was able to calculate the payoff matrix for was $24$ resources to distribute over 19 battlefields. The achieved speedup of the clash matrix for the same parameters was 48 times. The speedup for $23$ resources to distribute over 18 battlefields was 16 times (similar for both aggregation functions). 

Although our method is significantly faster, the time required to calculate the payoff matrix is still the bottleneck for solving the models in question. Therefore we include Figure~\ref{fig:runtimes_algs}, which shows a comparison of the time results of the calculation of the entire payoff matrix of the game, with the two versions of DOA (with and without the proposed heuristic) that yield a Nash Equilibrium of a game. We use the time results for calculating the entire payoff matrix as a time-bound on every approach that calculates the entire payoff matrix and then solves the game using this matrix (e.g. the standard LP-solver approach). All of the programs use a symmetrized model and calculate single payoffs using the clash matrix. The numbers of battlefields are shown on the horizontal axis. The number of resources of each player is once again a linear function of the number of battlefields (the number of battlefields increased~by~5). We used the state-of-the-art \emph{Gurobi} \cite{gurobi} LP-solver for solving the matrix games as well as finding the NE when using DOA ($\text{CoreLP}$ function).

For the \textit{more than opponent} game (Figure \ref{subfig:alg_mto}), speed-up achieved for the game where both players have 25 resources to distribute over 20 battlefields by the DOA algorithm when compared to the LP-solver is 34 times. In contrast, when using the DOA algorithm with the heuristic that we propose, one gets a speed-up of more than 300 times. When considering the \textit{more than opponent} game (Figure \ref{subfig:alg_maj}), speed-up achieved for the game where both players have 25 resources to distribute over 20 battlefields by the DOA algorithm when compared to the LP-solver is 10 times. In contrast, when using the DOA algorithm with the heuristic that we propose, one gets a speed-up of more than 80 times. This shows that the proposed heuristic, although simple, is very effective in both models. When comparing the LP-solver-based approach, where payoff matrices are computed using a GPU, and the clash matrix-based DOA approach, using the proposed heuristic, the achieved speed-up for the game where both players have 25 resources to distribute over 20 battlefields would be more than 3000 times for both models. Unfortunately, the run time of the GPU-based method for finding the payoff matrix takes too long for us to know the exact speedups that could be achieved.

\begin{figure}[!t]
\centering
\begin{subfigure}{0.5\columnwidth}
    \includegraphics[width=1.69in]{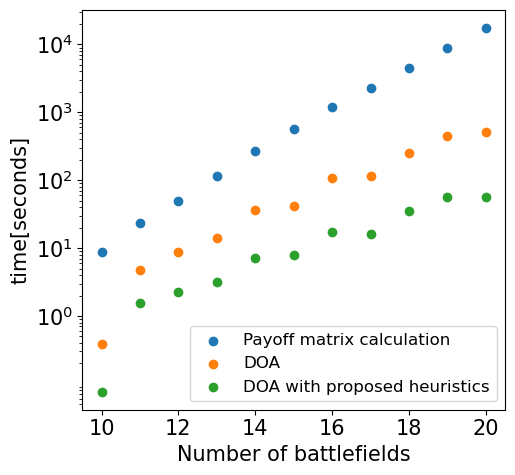}
    \caption{More than opponent}
    \label{subfig:alg_mto}
    \Description{This figure compares runtimes of different algorithms used for calculating the Nash Equilibrium of a symmetrized "more than opponent" model. The blue dots provide the runtime of an LP-solver algorithm, the orange dots provide the runtime of the double oracle algorithm and the green dots provide the runtime of the double oracle algorithm using the proposed heuristic, that allows for narrowing the set of pure strategies in which the best response if found. The y-axis scale is logarithmic}
\end{subfigure}%
\begin{subfigure}{0.5\columnwidth}
    \includegraphics[width=1.69in]{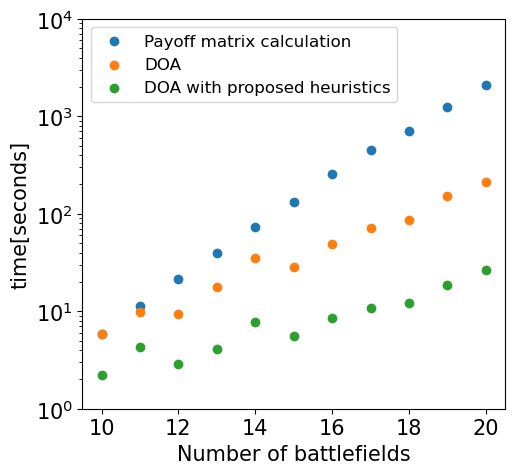}
    \caption{Majoritarian}
    \label{subfig:alg_maj}
    \Description{This figure compares runtimes of different algorithms used for calculating the Nash Equilibrium of a symmetrized "majoritarian" model. The blue dots provide the runtime of an LP-solver algorithm, the orange dots provide the runtime of the double oracle algorithm and the green dots provide the runtime of the double oracle algorithm using the proposed heuristic, that allows for narrowing the set of pure strategies in which the best response if found. The y-axis scale is logarithmic}
\end{subfigure}
\caption{Run times of different algorithms}
\label{fig:runtimes_algs}
\end{figure}

\section{Conclusions}
\label{sec:concl}
In this paper, we consider a class of conflicts with multiple battlefields with discrete resources and uniform battlefields values. We show that the NE of the models in question can be found by examining the symmetrized models, where the number of strategies, although still exponential with respect to the model parameters, is reduced by an exponential factor.

Our contribution has two main components. First, we propose a \emph{clash matrix algorithm} to obtain a single payoff for a pair of symmetric strategies in polynomial time. We carry out a comparison of run times required to compute the payoff matrices using the clash matrix algorithm and na\"{i}ve algorithms (CPU and GPU based). Our algorithm, which works for the whole class of the described models, is significantly faster in practice than both mentioned approaches. The symmetrization combined with the proposed clash matrix algorithm allows for reducing the number of strategies by an exponential factor at the expense of polynomial time for computing the payoffs.

Second, we apply the Double Oracle Algorithm to find an NE of the Chopstick Auctions. Using DOA with the heuristic that we propose, combined with the clash matrix approach for calculating single payoffs gives significant speed-up for the considered subclass of games. 

\begin{acks}
This work was supported by the Polish National Science Centre through grant 2018/29/B/ST6/00174. 
\end{acks}







\bibliographystyle{ACM-Reference-Format} 
\bibliography{paper}


\end{document}


\section{Recurrence formula extended explanation}
\subsection{Features of the clash matrix}
Note that due to the non-increasing ordering of the strategy vectors of both player, each column and row of a clash matrix describes monotonic sequence of length $n$. Two possible variants of the relation resulting from the above description for a pair of adjacent columns can be seen below:


Using this observation, we conclude that the clash matrix has a very characteristic structure. Area $W$ occupies the upper part of the matrix, potentially reaching lower and lower in each successive column, and area $T$ consists of rectangles, perhaps touching one another at the corners but never sharing a side. Example \ref{ex:cutoffs} illustrates such a matrix.

Note that clash matrices (or their submatrices) can be divided into three categories depending on what area the bottom right corner of the matrix (submatrix) is in. When looking for the recursive formula we will be ``cutting off'' certain parts of the matrix (submatrix) with a certain number of distributed rooks and thus reducing the size of the considered matrix (submatrix).

Using the aforementioned features of the clash matrix, we can easily indicate the successive ``cut-offs'' that reduce the matrix (submatrix) and allow us to describe the recursive relation. 

\begin{example}[Successive cutoffs of the clash matrix]\label{ex:cutoffs} 
An example of the clash matrix with the successive ``cut-offs'', marked by their numbers is shown in Figure \ref{fig:clash_matrix}. Area are colored as follows -- $L$ is yellow, $T$ is blue and $W$ red.

\begin{figure}[h]
    \caption{Clash matrix with successive cut-offs}
\begin{center}
\resizebox{180pt}{180pt}{
\begin{tikzpicture}
\filldraw[yellow!50] (0,1) rectangle (9,10);

\filldraw[red!50] (7,2) rectangle (9,10);
\filldraw[red!50] (5,5) rectangle (9,10);
\filldraw[red!50] (3,8) rectangle (9,10);
\filldraw[red!50] (4,7) rectangle (9,10);

\filldraw[cyan!50] (8,1) rectangle (9,2);
\filldraw[cyan!50] (7,2) rectangle (8,4);
\filldraw[cyan!50] (0,8) rectangle (3,10);
\filldraw[cyan!50] (4,6) rectangle (5,7);

\draw[black, very thick] (0,1) rectangle (9,10);



\draw[black, very thick] (0, 2) -- (8,2) -- (8, 10);
\draw[black, very thick] (0, 4) -- (7,4) -- (7, 10);
\draw[black, very thick] (0, 5) -- (7,5);
\draw[black, very thick] (5, 5) -- (5, 10);
\draw[black, very thick] (0, 6) -- (5,6);
\draw[black, very thick] (0, 7) -- (4,7) -- (4, 10);
\draw[black, very thick] (0, 8) -- (4,8);
\draw[black, very thick] (3, 8) -- (3, 10);

\node [anchor=center, scale=2] (note) at (1.5,9) {9};
\node [anchor=center, scale=2] (note) at (4.5,6.5) {6};
\node [anchor=center, scale=2] (note) at (7.5,3) {2};
\node [anchor=center, scale=2] (note) at (8.5,1.5) {1};

\node [anchor=center, scale=2] (note) at (3.5,4.5) {3};
\node [anchor=center, scale=2] (note) at (2.5,5.5) {5};
\node [anchor=center, scale=2] (note) at (2,7.5) {7};
\node [anchor=center, scale=2] (note) at (6,7.5) {4};
\node [anchor=center, scale=2] (note) at (3.5,9) {8};

\node [anchor=center, scale=2] (note) at (0.5, 10.5) {8};
\node [anchor=center, scale=2] (note) at (1.5, 10.5) {8};
\node [anchor=center, scale=2] (note) at (2.5, 10.5) {8};
\node [anchor=center, scale=2] (note) at (3.5, 10.5) {7};
\node [anchor=center, scale=2] (note) at (4.5, 10.5) {5};
\node [anchor=center, scale=2] (note) at (5.5, 10.5) {3};
\node [anchor=center, scale=2] (note) at (6.5, 10.5) {3};
\node [anchor=center, scale=2] (note) at (7.5, 10.5) {1};
\node [anchor=center, scale=2] (note) at (8.5, 10.5) {0};

\node [anchor=center, scale=2] (note) at (-0.5, 9.5) {8};
\node [anchor=center, scale=2] (note) at (-0.5, 8.5) {8};
\node [anchor=center, scale=2] (note) at (-0.5, 7.5) {6};
\node [anchor=center, scale=2] (note) at (-0.5, 6.5) {5};
\node [anchor=center, scale=2] (note) at (-0.5, 5.5) {4};
\node [anchor=center, scale=2] (note) at (-0.5, 4.5) {2};
\node [anchor=center, scale=2] (note) at (-0.5, 3.5) {1};
\node [anchor=center, scale=2] (note) at (-0.5, 2.5) {1};
\node [anchor=center, scale=2] (note) at (-0.5, 1.5) {0};


\end{tikzpicture}
}\label{fig:clash_matrix}
\end{center}
\end{figure}
\end{example}
\subsection{Recursive formula}

In this section we derive a recursive formula describing the number of possible distributions of non-checking rooks on a chessboard (clash matrix) with designated areas $W$, $L$, and $T$. Let:
\begin{itemize} 
    \item[] $\clashM_{|(i,j)}$ -- the submatrix of $\clashM$ consisting of the intersection of the first $i$ rows and $j$ columns of $\clashM$.
    \item[] $H(i,j,m,k_W, k_L \mid \clashM)$ -- the number of different ways in which $m$ rooks can be distributed in the intersection of the first $i$ rows and $j$ columns of the clash matrix $\clashM$, such that there are exactly $k_W$ rooks in area $W$ and exactly $k_L$ rooks are in area $L$.
    \item[] $R(i,j,t) = {j \choose t} {i \choose t} t!$ -- the number of different ways to place $t$ rooks in a uniform area with $i$ rows and $j$ columns.
\end{itemize}
From the definition of $H$ it follows that $h\left(k_W,k_L \mid \clashM \right) = H\left(n,n,n,k_W,k_L \mid \clashM \right)$.

As mentioned before, we can categorize clash matrices (submatrices) into three categories and therefore we have to define three variants of recursion. In the first variant, we want to ``cut off'' all rows that are entirely in $L$, in the second variant all columns that are in $W$. In the third variant, we choose the number of columns and rows so as to cover the entire coherent section of $T$.

\subsection{Corner in $L$ or $W$ (Variant 1 and 2)}
We show how to describe the recursion in variant one, when the bottom right corner of the fragment under consideration is in area $L$. Note that in such a case, the following is true:
\begin{multline*}
    H_1(i,j,m,k_W,k_L \mid \clashM) =\\ \sum_{r = 0}^m H(i',j,m,k_W, k_L-r \mid \clashM_{|(i',j)}) \cdot R(i - i', j - (m-r), r),
\end{multline*}
where $i'$ is the index of the last row that does not belong entirely to area $L$. This can be understood as follows. We divide $m$ rooks into two groups with corresponding counts $(m-r)$ and $r$, as shown below:
\begin{center}
    \documentclass[../paper]{subfiles}
\begin{document}
\begin{tikzpicture}
\draw[black, thick] (0,0) rectangle (4,4);
\filldraw[color=black, fill=black!30, very thick] (0,1) rectangle (4,4);
\node [anchor=center] (note) at (2,.5) {$W$};
\node [anchor=center] (note) at (2,2.5) {$\bm{M}_{|(i',j)}$};

\node [anchor=center] (note) at (4.2,3.75) {$1$};
\node [anchor=center] (note) at (4.2,3.25) {$2$};
\node [anchor=center] (note) at (4.2,1.25) {$i'$};
\node [anchor=center] (note) at (4.2,0.25) {$i$};

\node [anchor=center] (note) at (0.25,-0.25) {$1$};
\node [anchor=center] (note) at (0.75,-0.25) {$2$};
\node [anchor=center] (note) at (3.75,-0.25) {$j$};

\draw [-stealth](-0.5,0.5) -- (0.5,0.5);
\draw [-stealth](-0.5,2.5) -- (0.5,2.5);

\node [anchor=east] (note) at (-0.5,0.5) {$r$ rooks};
\node [anchor=east] (note) at (-0.5,2.5) {$(m-r)$ rooks};
\end{tikzpicture}
\end{document}
\end{center}

 The first group is distributed in the submatrix $\clashM_{|(i',j)}$, and the second group is distributed in the lower rectangle that lies entirely in $L$. The $(m-r)$ of columns in this rectangle are already inaccessible (checked) by the rooks deployed in the upper part, hence $R(i - i', j - (m-r), r)$ describes the number of possible deployments in this rectangle. All the rooks deployed in lower rectangle are in $L$, therefore the $k_W$ parameter has to stay the same for upper submatrix, while $k_L$ is decreased by $r$. 
 
 Second variant is analogous to the first one. Only difference is that in this variant we decrease $k_W$ by $r$ as this number of rooks is placed in rectangle that is in $W$. Therefore, we get:
\begin{multline*}
H_2(i,j,m,k_W,k_L \mid \clashM) = \\
\sum_{r = 0}^m H(i,j',m-r,k_W-r,k_L \mid \clashM_{|(i,j')}) \cdot R(i - (m-r), j - j', r),
\end{multline*}
where j' is the index of the last column that does not belong entirely to area $W$.

\subsection{Corner in $T$ (Variant 3)}
In variant 3, the recursive formula is expressed as follows:
\begin{align*}
H_3(i,j,m,&k_W,k_L \mid \clashM) = \\ 
\sum_{\substack{r_1,r_2,r_3 \geq 0 \\ r_1+r_2+r_3 \leq m}} & H(i',j',m-r_{s}, k_W - r_3, k_L - r_1 \mid \clashM_{|(i',j')}) \cdot \\ 
\cdot & R(i - i', j' - (m - r_{s}), r_1)  \cdot R(i-i'-r_1, j - j', r_2) \cdot\\
\cdot & R(i' - (m-r_{s}), j-j'-r_2, r_3),
\end{align*}
where $r_{s} = r_1 + r_2+r_3$. 
Analogous to the simpler variants described above, this time  we divide $m$ rooks into four groups, as indicated below:
\begin{center}
\documentclass[../paper]{subfiles}
\begin{document}
\begin{tikzpicture}
\draw[black, thick] (0,0) rectangle (4,4);
\draw[black, thick] (3,0) rectangle (4,1);
\filldraw[color=black, fill=black!30, very thick] (0,1) rectangle (3,4);
\node [anchor=center] (note) at (3.5,.5) {$T$};
\node [anchor=center] (note) at (1.5,.5) {$L$};
\node [anchor=center] (note) at (3.5,2.5) {$W$};
\node [anchor=center] (note) at (1.5,2.5) {$\bm{M}_{|(i',j')}$};

\node [anchor=center] (note) at (4.2,3.75) {$1$};
\node [anchor=center] (note) at (4.2,3.25) {$2$};
\node [anchor=center] (note) at (4.2,1.25) {$i'$};
\node [anchor=center] (note) at (4.2,0.25) {$i$};

\node [anchor=center] (note) at (0.25,-0.25) {$1$};
\node [anchor=center] (note) at (0.75,-0.25) {$2$};
\node [anchor=center] (note) at (2.75,-0.25) {$j'$};
\node [anchor=center] (note) at (3.75,-0.25) {$j$};

\draw [-stealth](-0.5,0.5) -- (0.5,0.5);
\draw [-stealth](-0.5,2.5) -- (0.5,2.5);

\draw [-stealth](4.2,0.6) -- (3.8,0.6);
\draw [-stealth](4.2,2.5) -- (3.8,2.5);

\node [anchor=east] (note) at (-0.5,0.5) {$r_1$ rooks};
\node [anchor=west] (note) at (4.2,0.6) {$r_2$ rooks};
\node [anchor=west] (note) at (4.2,2.5) {$r_3$ rooks};
\node [anchor=east] (note) at (0,2.75) {$(m-r_1-r_2-r_3)$};
\node [anchor=east] (note) at (-0.5,2.25) {rooks};
\end{tikzpicture}   
\end{document}    
\end{center}

We can think of it as follows:
\begin{enumerate}
    \item We place $(m-r_1-r_2-r_3)$ rooks in area $\clashM_{|(i',j')}$.
    \item In the rectangle lying in $L$ (whose $(m-r_1-r_2-r_3)$ columns have already been excluded in (1)) we place $r_1$ rooks in $i-i'$ free rows and $j' - (m-r_1-r_2-r_3)$ free columns.
    \item In the rectangle lying in $T$ (whose $r_1$ rows have already been excluded in (2)), we place $r_2$ rooks in $i-i'-r_1$ free rows and $j - j'$ free columns.
    \item In the rectangle lying in $W$ (whose $(m-r_1-r_2-r_3)$ rows and $r_2$ columns have already been excluded in (1) and (3)) we place $r_3$ rooks in $i' - (m-r_1-r_2-r_3)$ free rows and $j-j'-r_2$ free columns.
\end{enumerate}
\subsection{Boundary conditions} 
Using the formula described above, we will at some point arrive at a situation where submatrix $\clashM$ is entirely within one of the three areas under consideration. Then:
\begin{align*}
   &  H_4(i,j,m,k_W,k_L \mid \clashM) = \\ & = 
    \begin{cases}
        R(i,j,m), & \textrm{if $k_W=m$ and $k_L = 0$ and $\clashM \in W$,} \\
        R(i,j,m), & \textrm{if $k_L=m$ and $k_W = 0$ and $\clashM \in L$,} \\
        R(i,j,m), & \textrm{if $k_L=k_W = 0$ and $\clashM \in T$,} \\
        0, & \textrm{otherwise.}
    \end{cases}
\end{align*}

\section{Pessimistic costs of Algorithm 1}
\documentclass[../paper]{subfiles}
\begin{document}
Consider Algorithm~\ref{alg:dynamic}. Note that the array \texttt{knots[]} has no more than $2n$ elements (coordinate pairs). This is because the first element certainly has at least one coordinate greater than $0$ (otherwise it would describe a trivial rectangle) and with each subsequent node at least one coordinate increases by at least $1$.

For each of no more than $2n$ knots, Algorithm~\ref{alg:dynamic} checks no more than $(n+1)$ different numbers of rooks, no more than $(n+1)$ different values of the parameter $k_W$, and no more than $(n+1)$ different values of the parameter $k_L$, hence the required memory of Algorithm~\ref{alg:dynamic} is limited by $O(n^4) = 2n \cdot (n+1) \cdot (n+1) \cdot (n+1)$. 
 
For each element of the array \textit{values}[], Algorithm~\ref{alg:dynamic} computes the corresponding sum. The most time-expensive case is Variant 3, where the corner of the considered matrix (submatrix) lies in $T$ which requires summing over three iterators $(r_1, r_2, r_3)$. In this case there are as many different elements of the sum as there are divisions of the number of rooks under consideration into the sum of $4$ natural numbers, i.e., $m+3 \choose 3$. We also perform $3$ multiplications and one addition for each division. This means that the Algorithm~\ref{alg:dynamic} does not perform more than $4 \cdot 2n \cdot (n+1) \cdot (n+1) \cdot (n+1) \cdot {n+3 \choose3} = O(n^7)$ arithmetic operations.
\end{document}
\section{MWU algorithm}
The algorithm, given in Algorithm 2 and described in~\cite{mwu_meta}, requires the following arguments:

\begin{description}
    \item[\textit{matrix}] -- the payoff matrix of the column player of the given game. We obtain the payoff matrix using Algorithm~1.
    \item[\textit{phi}] -- a multiplier in the range $(0,0.5]$, which is the parameter by which we multiply the cost of each decision before updating its weight. 
    \item[\textit{stepsNumber}] -- the number of steps, or iterations of the main loop of the algorithm. As the number of steps increases, the accuracy of the results should increase.
\end{description}
Function $normalize$, used at the end of the main loop, is defined as follows:
\begin{equation*}
    normalize(\bm{x}) = \frac{1}{\sum_{i=1}^n x_i} \cdot \bm{x}.
\end{equation*}

\begin{algorithm}
\SetAlgoLined 
\KwIn{Parameters $matrix$, $phi$, and $stepsNumber$}
\KwResult{A pair of mixed strategies that are $\epsilon$-equilibrium of a given game}

 Prepare a vector $p_t$ of size equal to the number of pure symmetric row player strategies with uniform distribution\;
 Prepare a null vector $jSummed$ of size corresponding to the number of pure symmetric strategies of the column player\;
 $smallestColumnPayoff = 1$\;
 $pBest \leftarrow p_t$\;
 \For{$step \leftarrow 0$ \KwTo $stepsNumber - 1$}
 {
    $columnPlayerPayoffs \leftarrow p_t \cdot matrix$\;
    $bestResponse \leftarrow \arg\max(columnPlayerPayoffs)$\;
    \If{$columnPlayerPayoffs[bestResponse] < smallestColumnPayoff$}
    {
        $smallestColumnPayoff \leftarrow columnPlayerPayoffs[bestResponse]$\;
        $pBest \leftarrow p_t$\;
    }
    $jSumed[bestResponse] \leftarrow  jSumed[bestResponse] + 1$\;
    $m_t \leftarrow matrix[:,bestResponse]$\;
    \tcc{assign bestResponse'th column of matrix as costs vector}
    $p_t \leftarrow (1 - phi \cdot m_t) \cdot p_t$\;
    \tcc{elementwise multiplication of two vectors}
    $normalize(p_t)$\;
 }
 $jDistribution \leftarrow normalize(jSumed)$\;
 \Return{$(pBest, jDistribution)$}
 \caption{Multiplicative weights update algorithm}
\end{algorithm}